\theoremstyle{plain}
\newtheorem{Th}{Theorem}[section]
\newtheorem{Cor}[Th]{Corollary}
\newtheorem{Lem}[Th]{Lemma}
\newtheorem{Prop}[Th]{Proposition}
\theoremstyle{definition}
\theoremstyle{remark}
\newtheorem*{Rem}{Remark}%[section]
\numberwithin{equation}{section}
\newcommand{\PP}{{\mathbb P}}
\newcommand{\DD}{{\mathbb D}}
\newcommand{\ZZ}{{\mathbb Z}}
\newcommand{\bphi}{\boldsymbol{\phi}}
\newcommand{\bPhi}{\boldsymbol{\Phi}}
\newcommand{\bpsi}{\boldsymbol{\psi}}
\begin{document}

\title
{Non-commutative lattice modified Gel'fand--Dikii systems}

\author{Adam Doliwa}

\address{A. Doliwa, Faculty of Mathematics and Computer Science, University of Warmia and Mazury,
ul.~S{\l}oneczna~54, 10-710~Olsztyn, Poland}

\email{doliwa@matman.uwm.edu.pl}
\urladdr{http://wmii.uwm.edu.pl/~doliwa/}

\date{}
\keywords{integrable difference equations; Gel'fand-Dikii hierarchies; multidimensional consistency; lattice modified Boussinesq equation; Desargues maps; non-commutative Hirota equation; non-isospectral integrable systems}
\subjclass[2010]{37K10, 37K25, 37K60, 39A14, 51P05}

\begin{abstract}
We introduce integrable multicomponent non-commutative lattice systems, which can be considered as analogs of the modified Gel'fand--Dikii hierarchy. We present the corresponding systems of Lax pairs and we show directly multidimensional consistency of these Gel'fand--Dikii type equations. We demonstrate how the systems can be obtained as periodic reductions of the non-commutative lattice Kadomtsev--Petviashvilii hierarchy. The geometric description of the hierarchy in terms of Desargues maps helps to derive non-isospectral generalization of the non-commutative lattice modified Gel'fand--Dikii systems. We show also how arbitrary functions of single arguments appear naturally in our approach when making commutative reductions, which we illustrate on the non-isospectral non-autonomous versions of the lattice modified Korteweg--de~Vries and Boussinesq systems. 
\end{abstract}
\maketitle

\section{Introduction}
The so called Gel'fand--Dikii (GD) hierarchy \cite{GD,Manin} generalizes the 
Korteweg--de~Vries (KdV) hierarchy~\cite{DKJM,SegalWilson} to higher order spectral problems. The lattice GD hierarchy, obtained via the direct linearization approach and given in~\cite{FWN-GD}, is a multicomponent system which goes to the "continuous" hierarchy under suitable limits. The simplest systems are the lattice potential KdV equation~\cite{FWN-Capel-KdV} and a lattice version of the Boussinesq equation. In~\cite{FWN-GD} it was also indicated how to obtain a "modified" version of the lattice GD hierarchy, and its two simplest members were given explicitly. 

The origins of soliton theory can be traced back to classical developments in theory of submanifolds and their transformations~\cite{Bianchi,Darboux-OS,DarbouxIV,Eisenhart-TS,Tzitzeica}. The analogous geometric interpretation of integrable partial difference systems was developed recently starting from~\cite{BP1,BP2,DCN,MQL,BobenkoSchief,KoSchief2,TQL}. Integrability of discrete (Adler)-Gel'fand-Dikii equations was investigated recently using the so called pentagram maps~\cite{Schwartz-pentagram,OvsienkoSchwartzTabachnikov-CMP,Adler-tangential,GMBeffa-AGD-pentagram}. From author's personal viewpoint the key objects in geometric theory of integrable discrete systems are the notions of Desargues maps~\cite{Dol-Des} (see Section~\ref{sec:Des}) and closely related lattices of planar quadrilaterals~\cite{Sauer,DCN,MQL} (quadrilateral lattices for short), which are difference geometry counterpart of conjugate nets investigated thoroughly by Darboux, Bianchi and their followers. 

In recent studies on discrete integrable systems the property of multidimensional consistency \cite{ABS,FWN-cons} is considered as the main concept of the theory. Roughly speaking, it is the possibility of extending the number of independent variables of a given nonlinear system by adding its copies in different directions without creating this way inconsistency or multivaluedness. The multidimensional consistency is considered as "the precise analogue of the \emph{hierarchy} of nonlinear evolution equations in the case of continuous systems"~\cite{NRGO}. Such approach lies also at the roots of an earlier "principle" of discretization of integrable differential equations via iterated applications of B\"{a}cklund transformations \cite{LeviBenguria}; for its application to systems of a geometric origin see also \cite{NiSchief,GanzhaTsarev,TQL}. Recently, multidimensional consistency of GD hierarchy and related multicomponent systems was studied in~\cite{LobbNijhoff-GD-L,Hietarinta-Boussinesq,Atkinson-Lobb-Nijhoff-GD}.

Hirota's discrete Kadomtsev--Petviashvilii (KP) equation \cite{Hirota} plays dominant role in the whole theory of integrable systems both classical and quantum~\cite{KNS-rev}. Its direct geometric counterpart are Desargues maps, while multidimensional quadrilateral lattices give rise to the so called discrete Darboux equations~\cite{BoKo}. Multidimensional consistency of the geometric construction of quadrilateral lattices and of the discrete Darboux equations was discussed already in \cite{MQL}.  Subsequently, certain geometric restrictions compatible with such multidimensionally consistent construction scheme were used~\cite{CDS,q-red,DS-sym,BQL,CQL} to isolate integrable reductions of the quadrilateral lattices.
The important relation of four dimensional consistency of the Hirota equation (in its commutative Schwarzian form) and the Desargues configuration has been observed by Wolfgang Schief (the author has learned about that relation from the talk of Alexander Bobenko~\cite{Bobenko-talk}). Multidimensional consistency of both systems survives the transfer to their non-commutative versions~\cite{Dol-GAQL,Dol-Des}, where as the ambient space of lattice submanifolds one takes the projective space over an arbitrary division ring~\cite{BeukenhoutCameron-H}. One of motivations to study non-commutative versions of integrable discrete systems \cite{FWN-Capel,Kupershmidt,BobSur-nc,Nimmo-NCKP} is their relevance in integrable lattice field theories. In particular, as shown by \cite{BaMaSe,Sergeev-q3w}, the four dimensional consistency of the geometric construction of multidimensional lattice of planar quadrilaterals~\cite{MQL} is related to Zamolodchikov's tetrahedron equation~\cite{Zamolodchikov}, which is a multidimensional analogue of the quantum Yang--Baxter equation~\cite{Baxter,QISM,YB-Miwa}. Recently, the four dimensional consistency of Desargues maps has been related~\cite{DoliwaSergeev-pentagon,Dol-WCR-Hirota} to certain solutions of the functional pentagon quation~\cite{Zakrzewski} and of its quantum reduction, which is of fundamental importance in the modern analytic theory of quantum groups~\cite{Timmermann}.

The non-autonomous and/or non-isospectral versions of integrable equations contain additional functional freedom which, from physical point of view, allows to study solitary waves in non-uniform media with relaxation effects, and therefore provide more realistic models. The relevant mathematical tools are interesting generalization of standard techniques of integrable systems theory both for differential and difference equations, see for example 
\cite{Calogero-ni,Fuchssteiner,GordoaPickering,LeviRagnisco,LeviRagniscoRodriguez,
MaFuchssteiner,Cieslinski}. It is remarkable that certain non-isospectral equations relevant in gravity theory \cite{Ernst} were known in a purely geometric context of surface theory \cite{Bianchi,LeviSym}, see also \cite{Schief-Calapso,DoliwaNieszporskiSantini-Bianchi-asympt} where corresponding discrete system was investigated. Recently a non-commutative non-isospectral KP equation was studied in \cite{ZhuZhangLi}. 

In making dimensional symmetry reduction \cite{FWN-lB,NRGO} of a given integrable partial difference system, to achieve full generality of the resulting equation the non-autonomous version of the system is indispensable, see for example \cite{GRSWC} for description of this feature in the case of discrete Painlev\'{e} equations. There are various mechanism of deautonomisation, for example using the singularity 
confinement~\cite{GrammaticosRamaniPapageorgiou} or an appropriate reduction condition starting from non-autonomous version of the Hirota system~\cite{WTS}. In geometric approach to discrete systems there are known cases where non-autonomous factors result from reductions of autonomous integrable systems of higher dimensionality, see for example~\cite{DoliwaNieszporskiSantini-Bianchi-asympt,Doliwa-isoth}. 
In literature there are known examples of non-commutative or quantum Painlev\'{e} equations \cite{Nagoya,RR-ncPII,BertolaCafasso-CMP,Kuroki}. It is desirable to apply similar reduction procedure starting from non-commutative or quantum non-autonomous and/or non-isospectral integrable systems. 

The construction of the paper is as follows. We concentrate on the non-commutative lattice modified GD systems and we start in Section~\ref{sec:GD} by introducing such a system. We present its linear problem and we also show directly multidimensional consistency of the system. Moreover, in making commutative reduction we discover that arbitrary functions of single arguments show up here naturally. In the simplest case we recover the non-autonomous lattice modified KdV equation, and then we present the corresponding non-autonomous lattice modified Boussinesq equation. Then in Section~\ref{sec:Des} we recall relevant facts from geometric theory of the non-commutative Hirota system in order to present the lattice non-commutative modified GD system as reduction of the Desargues maps. We also make connection to a non-commutative version of the KP hierarchy and we show an auxiliary result on allowed gauge transformations of the linear problem for non-commutative Hirota system. Such considerations turn out to be useful in 
Section~\ref{sec:per-red} where we consider periodic reduction of Desargues maps and of the non-commutative KP hierarchy. In doing that we derive non-isospectral generalization of the non-commutative lattice modified Gel'fand--Dikii system. We show its integrability by presenting the corresponding Lax system and by proving directly its multidimensional consistency. We also discuss commutative reduction of the system and we derive non-isospectral non-autonomous versions of the lattice modified KdV and Boussinesq systems. The concluding section discusses some related points of current and future research.

\section{Lattice modified Gel'fand--Dikii systems}
\label{sec:GD}
For a function $f$ defined on $N$-dimensional integer lattice $\ZZ^N$ we denote by $f_{(i)}$ its translate in the variable $n_i\mapsto n_i + 1$, $i=1,\dots,N$, i.e. 
$f_{(i)}(n_1,\dots,n_i,\dots ,n_N) = f(n_1,\dots,n_i + 1,\dots ,n_N)$.
Consider the following system of partial difference equations
\begin{align} \label{eq:GD-K}
(r_{k(j)}^{-1} - r_{k(i)}^{-1}) r_{k(ij)} & = r_{k+1}^{-1} (r_{k+1(i)} - r_{k+1(j)} ), \qquad k=1, \dots , K-1,\\ \nonumber
(r_{K(j)}^{-1} - r_{K(i)}^{-1}) r_{K(ij)} & = r_{1}^{-1} (r_{1(i)} - r_{1(j)} ), \qquad \qquad i\neq j , 
\end{align}
where $r_k :\ZZ^N\to\DD$, $k=1, \dots ,K$, are unknown functions from $N\geq 2$ dimensional integer lattice $\ZZ^N\ni (n_1, \dots , n_N) = n$ taking values in a division ring (skew field) $\DD$. To compactify notation, in this Section we consider the index $k$ modulo $K$ (starting from $k=1$). 

Equations \eqref{eq:GD-K} provide compatibility of the linear system
\begin{equation} \label{eq:LmA}
(\bpsi_1, \dots , \bpsi_K)_{(i)} = 
 (\bpsi_1, \dots , \bpsi_K) \left(  \begin{array}{ccccc}  
1 & 0  & \cdots & 0 & \lambda r_1 r_{K(i)}^{-1} \\
\lambda r_2 r_{1(i)}^{-1} & 1 & 0 & \hdots  & 0 \\
0 & \lambda r_3 r_{2(i)}^{-1} & \ddots &  &  \vdots \\
\vdots &  & & 1 & 0 \\
0 & 0 & \ \hdots  & \lambda r_K r_{K-1(i)}^{-1}  & 1\end{array} \right) ,
\end{equation}
where $\lambda$ is a central (spectral) parameter. 
We remark that due to thier geometric meaning (see Section~\ref{sec:Des}) we can treat $\bpsi_k$ as column vectors. 

\subsection{Relation to the lattice modified Korteweg--de~Vries and Boussinesq equations}  \label{sec:mKdV-B}
The linear problem \eqref{eq:LmA} in the simplest case $K=2$ and the corresponding nonlinear system (in the commutative case) was studied in \cite{Hay-JMP} as the double lattice modified KdV system. In this Section we show how to approach to this system from general perspective. When the division ring $\DD$ is commutative then there naturally  show up functions of single variables, which will play the role of non-autonomous factors.
\begin{Lem} \label{lem:prod}
Given solution $r_k$, $k=1,\dots ,K$ of the GD system  \eqref{eq:GD-K}, if the division ring $\DD$ is commutative then the function $R=r_1 r_2 \dots r_K$, splits into the product of functions of single variables $n_i$, $i=1,\dots , N$.
\end{Lem}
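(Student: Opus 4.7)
The plan is to show that in the commutative case the product $R := r_1 r_2 \cdots r_K$ satisfies the discrete separation-of-variables relation
\[
R_{(ij)}\, R = R_{(i)}\, R_{(j)}, \qquad i\neq j,
\]
and then to deduce the claimed splitting from this equation by a routine argument.

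First I would solve equations \eqref{eq:GD-K} for $r_{k(ij)}$ under the assumption of commutativity. Multiplying each equation through by $r_{k(i)}\, r_{k(j)}\, r_{k+1}$ and rearranging yields
\[
r_{k(ij)} \;=\; \frac{r_{k(i)}\, r_{k(j)}\,(r_{k+1(i)} - r_{k+1(j)})}{r_{k+1}\,(r_{k(i)} - r_{k(j)})},
\qquad k = 1, \dots, K,
\]
with indices taken modulo $K$ (so that the $k=K$ relation is precisely the second equation in \eqref{eq:GD-K}).

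The crucial step is then to multiply these $K$ identities together. The products $\prod_{k} (r_{k(i)} - r_{k(j)})$ in the denominators and $\prod_{k}(r_{k+1(i)} - r_{k+1(j)})$ in the numerators are equal by the cyclic shift $k \mapsto k+1 \bmod K$, so they telescope to unity. What survives is
\[
R_{(ij)} \;=\; \prod_{k=1}^{K} r_{k(ij)} \;=\; \frac{\bigl(\prod_k r_{k(i)}\bigr)\bigl(\prod_k r_{k(j)}\bigr)}{\prod_k r_{k+1}} \;=\; \frac{R_{(i)}\, R_{(j)}}{R},
\]
which is exactly the separation relation above. This telescoping, and in particular the role played by the ``wrap-around'' equation for $k=K$, is the one genuinely substantive step; everything else is bookkeeping.

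From $R_{(ij)}\, R = R_{(i)}\, R_{(j)}$ I would conclude, as usual, that the ratio $R_{(i)}/R$ is invariant under every shift $(j)$ with $j\neq i$ and therefore depends only on $n_i$. Iterating this observation over the $N$ directions (fixing the values of $R$ on the coordinate axes and reconstructing $R$ at a general point by successive one-dimensional shifts) gives $R(n_1,\dots,n_N) = \phi_1(n_1)\cdots\phi_N(n_N)$ for some scalar functions $\phi_i$, as required. The only caveat to flag is that one needs $R$, $r_{k(i)}-r_{k(j)}$, and the relevant ratios to be nonzero, which is automatic in the generic situation considered here.
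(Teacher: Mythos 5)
Your proposal is correct and follows essentially the same route as the paper: multiply the $K$ cyclic equations, use commutativity so the difference factors and the $r_{k+1}$ products cancel telescopically, obtain $R_{(ij)}R = R_{(i)}R_{(j)}$ (under the same genericity assumption $r_{k(i)}-r_{k(j)}\neq 0$ that the paper flags), and conclude the separation into single-variable factors. The paper states this more tersely but the argument is identical.
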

\begin{proof}
Multiply all the $K$ equations and use the commutativity assumption to obtain that in the generic case (i.e. when the differences $r_{k(i)} - r_{k(j)}$ do not vanish for $i\neq j$)
\begin{equation} \label{eq:RR}
R_{(ij)} R = R_{(i)} R_{(j)},  \qquad i\neq j,
\end{equation}
which implies the statement.
\end{proof}
Let us consider in more detail the simplest case $K=2$. Denote by $G(n) = G_1(n_1) \dots G_N(n_N)$ the product of functions of single variables such that the function $R$ in Lemma~\ref{lem:prod} reads $R=G^2$ (we assume that square roots exist --- we take such a form for convenience to match known results and to make resulting equations more symmetric). Moreover, define the functions $F_i$ of single variable $n_i$ as "discrete logarithmic derivatives" of $G_i$, i.e. $G_i(n_i + 1) = F_i(n_i) G_i(n_i)$, $i=1,\dots , N$. We parametrize the unknown functions $r_1$ and $r_2$ in terms of a single function $x$ as follows 
\begin{equation}
r_1 = xG, \qquad r_2 = \frac{G}{x}.
\end{equation}
Then the system \eqref{eq:GD-K} takes the form of the well known non-autonomous lattice modified KdV system \cite{Hirota-KdV,Hirota-sG,FWN-Capel-KdV,PapageorgiouGrammaticosRamani}
\begin{equation} \label{eq:l-mKdV} 
x_{(ij)} = x \;\frac{x_{(i)}F_j - x_{(j)}F_i}{x_{(j)}F_j - x_{(i)}F_i} , \qquad i\neq j.
\end{equation}
In such a form the functions $F_i$ are arbitrary and depend on the model under consideration. Therefore the $K=2$ field version of equations \eqref{eq:GD-K} can be called the non-commutative (and in a sense also non-autonomous) lattice modified KdV system. Actually, for $N> 2$ it is fair to call it the non-commutative lattice modified KdV hierarchy; see also Section~\ref{sec:consistency}.
\begin{Rem}
In \cite{BobSur-nc} a different (one-component and with central non-autonomous factors/parameters) non-commutative version of the lattice modified KdV system \eqref{eq:l-mKdV} was studied.
\end{Rem}

Consider next the case $K=3$, assuming still commutativity of $\DD$, and setting this time  $R=G^3$ keeping the definition of $F_i$ unchanged.  Then we parametrize the unknown functions $r_1$, $r_2$ and $r_3$ in terms of two functions $x$ and $y$ as follows
\begin{equation*}
r_1 = xG, \qquad r_2 = \frac{yG}{x}, \qquad r_3= \frac{G}{y},
\end{equation*}
which leads to a new system (the non-autonomous version of that studied in \cite{FWN-lB,Atkinson-Lobb-Nijhoff-GD})
\begin{equation}
\label{eq:l-B-2} 
x_{(ij)} = \frac{x}{y} \;
\frac{y_{(j)} x_{(i)}F_j - y_{(i)} x_{(j)}F_i}{x_{(j)}F_j - x_{(i)}F_i},
\qquad y_{(ij)} = x \; \frac{y_{(i)}F_j - y_{(j)}F_i}{x_{(j)}F_j - x_{(i)}F_i}, \qquad 
i\neq j, 
\end{equation}
which could be rewritten as the non-autonomous lattice modified Boussinesq equation 
\begin{equation} \label{eq:na-l-m-B}
\begin{split}
\left( \frac{ y_{(i)} F_j - y_{(j)} F_i }{ y_{(ij)} } \right)_{(ij)} 
- y \left( \frac{ F_j }{ y_{(j)} } -  \frac{ F_i }{ y_{(i)} } \right) & =  \\
= \left(  \frac{ y_{(ij)} }{ y } \, 
\frac{ y_{(j)} F_j^2 -y_{(i)} F_i^2 }{ y_{(i)} F_j - y_{(j)} F_i } \right)_{(j)}  - &
\left(  \frac{ y_{(ij)} }{ y } \, 
\frac{ y_{(i)} F_i^2 -y_{(j)} F_j^2 }{ y_{(j)} F_i - y_{(i)} F_j } \right)_{(i)} , \qquad i\neq j,
\end{split}
\end{equation}
which is a direct generalization (functions of single variables replace parameters) of the equation considered in \cite{FWN-lB}.

Similar procedure can be applied, in principle, for arbitrary $K$. It is convenient to start from the function $R$ of the form $R=G^K$ (we assume existence of roots of appropriate orders, eventually we can consider the field extensions) and parametrize functions $r_k$ as follows
\begin{equation*}
r_1 = \tau_1 G, \qquad r_2 = \frac{\tau_2}{\tau_1}G, \quad \dots , \qquad r_{K-1} = \frac{\tau_{K-1}}{\tau_{K-2}}G, \qquad r_K = \frac{G}{\tau_{K-1}}.
\end{equation*}
If we put $\tau_0=1$ and  $\tau_{k+K} = \tau_k$ then in the autonomous case we recover the approach to the lattice modified GD systems described in \cite{Atkinson-Lobb-Nijhoff-GD} as reductions of the Hirota 
equation~\cite{Hirota}. Our case can be recovered from the non-autonomous version of the Hirota equation~\cite{WTS} for appropriate choice of the non-autonomous factors; see Section~\ref{sec:per-red}. We remark that the Lax matrices \eqref{eq:LmA} differ from those considered in \cite{Atkinson-Lobb-Nijhoff-GD}. Our linear system is chosen to match the Lax pair for the double lattice modified KdV system \cite{Hay-JMP}. There exists simple transition between the two Lax systems which will be given in Section~\ref{sec:Des}.

\subsection{Multidimensional consistency of the system}
\label{sec:consistency}

Equations \eqref{eq:GD-K} form closed systems for $N=2$ of independent variables. However writing it for arbitrary $N\geq 2$ we implicitly considered it as multidimensionally consistent, what we are going to demonstrate. In Section \ref{sec:Des} we show that equations  \eqref{eq:GD-K} can be obtained as a periodic reduction of the lattice non-commutative KP system, which is itself multidimensionally consistent \cite{Dol-Des}. From that point of view the following result can be considered as the proof of integrability of the periodic reduction. 
\begin{Th} \label{th:3D-cons}
The non-commutative lattice modified GD system \eqref{eq:GD-K} is three-dimensionally consistent.
\end{Th}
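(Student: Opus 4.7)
The plan is to verify three-dimensional consistency by direct computation. Fix three pairwise distinct directions $i, j, \ell$, and take as initial data $r_k, r_{k(i)}, r_{k(j)}, r_{k(\ell)}$ for $k = 1, \ldots, K$. The three base pair equations \eqref{eq:GD-K} in the pairs $(i,j)$, $(i,\ell)$, $(j,\ell)$ determine
\begin{equation*}
r_{k(ij)} = \bigl(r_{k(j)}^{-1} - r_{k(i)}^{-1}\bigr)^{-1} r_{k+1}^{-1}\bigl(r_{k+1(i)} - r_{k+1(j)}\bigr),
\end{equation*}
and the analogues for $r_{k(i\ell)}$ and $r_{k(j\ell)}$. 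Then the three shifted pair equations, for $(i,j)$ shifted by $\ell$, for $(i,\ell)$ shifted by $j$, and for $(j,\ell)$ shifted by $i$, each furnish a formula for $r_{k(ij\ell)}$. Three-dimensional consistency is the assertion that the three resulting expressions coincide for every $k$.

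I would reduce the verification as follows. The cyclic symmetry $k \mapsto k+1$ of the system reduces the check to a single $k$. Moreover, the permutation $(i, j, \ell) \mapsto (j, \ell, i)$ cyclically permutes the three shifted pair equations, so establishing one of the three equalities forces the remaining ones by relabeling. Thus the proof essentially reduces to a single algebraic identity in $\DD$ involving the initial data for the indices $k, k+1, k+2 \pmod K$. Substituting the base expressions for $r_{k(ij)}, r_{k(i\ell)}, r_{k(j\ell)}$ into the shifted equations, the identity can be verified by a direct manipulation that carefully tracks the left/right order of non-commuting factors. A convenient organization of the same calculation is provided by the Lax pair \eqref{eq:LmA}: the base pair equations correspond to zero-curvature relations $L_i L_{j(i)} = L_j L_{i(j)}$, and three-dimensional consistency translates to path-independence of the triple linear propagator $\Psi \mapsto \Psi_{(ij\ell)}$ along the six edge-paths of the $3$-cube.

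The main obstacle is the careful handling of non-commutativity in $\DD$: every step must strictly preserve left/right multiplicative order, so the usual commutative devices (cross-multiplication, common denominators, clearing fractions) are unavailable. Equally delicate is the cyclic index closure $k + K = k$, which is essential for the chain of substitutions induced by the coupling between $r_k$ and $r_{k+1}$ to terminate and yield a closed relation. Once the bookkeeping is completed, the identity should be an algebraic consequence of the base pair equations, consistent with the interpretation (to be given in Section~\ref{sec:per-red}) of the system as a periodic reduction of the multidimensionally consistent non-commutative lattice KP hierarchy.
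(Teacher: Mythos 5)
Your setup is correct and matches the paper's strategy in outline: compute the three candidate values of the triple-shifted variable from the three shifted pair equations, use the permutation symmetry in the lattice directions and the cyclic symmetry in $k$ to reduce to a single equality, and verify that equality algebraically in $\DD$. However, the proposal stops exactly where the real work begins. You never exhibit, let alone verify, the algebraic identity on which everything hinges; you only assert that ``the identity can be verified by a direct manipulation'' and ``should be an algebraic consequence of the base pair equations.'' In a division ring one cannot cross-multiply or clear denominators, so there is no routine mechanism that guarantees the two expressions \eqref{eq:M-R1} and \eqref{eq:M-R2} coincide. The paper's proof supplies precisely this missing ingredient: the auxiliary identity \eqref{eq:LR-identity}, checked directly from \eqref{eq:GD-K} (and suggested by the Hirota-system relation $U_{ij}U_{il(j)}=U_{il}U_{ij(l)}$), whose two equal and generically nonvanishing sides are used as left multipliers of the two candidate expressions; the difference $E=LR_1-RR_2$ is then reduced to zero using the GD equations at levels $k+1$ and $k+2$, and invertibility of $L=R$ yields $R_1=R_2$. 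Without producing such an identity (or an equivalent device), your argument is a plan, not a proof.

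Two further points. Your claim that the cyclic closure $k+K=k$ is ``essential for the chain of substitutions to terminate'' is off the mark: the verification is local in the index $k$, involving only the levels $k$, $k+1$, $k+2$, and the paper notes that the same proof goes over to the unreduced, infinite-component system \eqref{eq:KP}; periodicity plays only a minor role here (in contrast to Lemma~\ref{lem:prod}). Finally, invoking the Lax pair \eqref{eq:LmA} and ``path-independence of the triple propagator on the $3$-cube'' does not by itself establish three-dimensional consistency of the nonlinear system --- the zero-curvature relations presuppose the very compatibility you are trying to prove, and in any case you do not carry out that computation either. As it stands, the proposal has a genuine gap: the central algebraic identity and the non-commutative manipulation that closes the argument are missing.
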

\begin{proof}
To show that the two expressions below are equal (below we assume that the indices $i,j,l$ are distinct)
\begin{align} \label{eq:M-R1}
[r_{k(il)}]_{(j)} & = (r_{k(lj)}^{-1} - r_{k(i j)}^{-1} )^{-1} r_{k+1(j)}^{-1} 
(r_{k+1(ij)} - r_{k+1 (l j)}) ,\\ \label{eq:M-R2}
[r_{k(ij)}]_{(l)} & = (r_{k(jl)}^{-1} - r_{k(i l)}^{-1} )^{-1} r_{k+1(l)}^{-1} 
(r_{k+1(i l)} - r_{k+1 (j l)}) ,
\end{align}
it is convenient to note first the following identity
\begin{equation} \label{eq:LR-identity}
(r_{k(i)}^{-1} - r_{k(j)}^{-1} ) r_{k(ij)} (r_{k(ij)}^{-1} - r_{k(l j)}^{-1} ) =
(r_{k(i)}^{-1} - r_{k(l)}^{-1} ) r_{k(il)} (r_{k(il)}^{-1} - r_{k(j l)}^{-1} ),
\end{equation}
which can be verified directly by simple algebraic manipulation taking into account equations \eqref{eq:GD-K}. Notice that generically the identity \eqref{eq:LR-identity} is non-trivial meaning that its sides do not vanish. Then we consider the expression 
$E= L R_1 - R R_2$, where $L/R$ denotes left/right hand side of identity 
\eqref{eq:LR-identity}, and $R_1/R_2$ is the right hand side of equation \eqref{eq:M-R1}/\eqref{eq:M-R2}. We show below the main steps of the calculation
\begin{align*}
E & = r_{k+1}^{-1}\left[ 
(r_{k+1(i)} - r_{k+1(j)} ) r_{k+1(j)}^{-1}
(r_{k+1(ij)} - r_{k+1(lj)} ) - (r_{k+1(i)} - r_{k+1(l)} ) r_{k+1(l)}^{-1} 
(r_{k+1(il)} - r_{k+1(jl)} ) \right] \\ &= r_{k+1}^{-1} r_{k+1(i)} \left[
(r_{k+1(j)}^{-1} - r_{k+1(i)}^{-1} ) r_{k+1(ij)} - (r_{k+1(l)}^{-1} - r_{k+1(i)}^{-1} ) r_{k+1(il)} - r_{k+2}^{-1}(r_{k+2(l)} - r_{k+2(j)} ) \right] =0,
\end{align*}
which concludes the proof. We only remark that periodicity in the index $k$ was used in the proof but it was not so crucial like in the proof of Lemma~\ref{lem:prod}.
\end{proof}

\section{Desargues maps, non-commutative Hirota system, and the lattice KP hierarchy}
\label{sec:Des}
In this Section we recapitulate first main elements of the geometric approach to non-commutative Hirota system \cite{Nimmo-NCKP} via the so-called Desargues maps \cite{Dol-Des}; see also \cite{Schief-talk,KoSchief-Men} to compare with earlier related works. We also derive other properties of the equations which will be useful in performing periodic reduction to (non-isospectral and non-commutative) lattice modified GD system. 
\subsection{Desargues maps and the Hirota system}
Consider Desargues maps 
$\Phi\colon\ZZ^{\tilde{N}}\to\PP^M(\DD)$, which are characterized by the condition that for arbitrary $n\in\ZZ^{\tilde{N}}$ the points
$\Phi(n)$, $\Phi_{(i)}(n)$ and  $\Phi_{(j)}(n)$, for $i\neq j$, are collinear (see Fig.~\ref{fig:Desargues-Veblen}).
\begin{figure}
\begin{center}
\includegraphics[width=7cm]{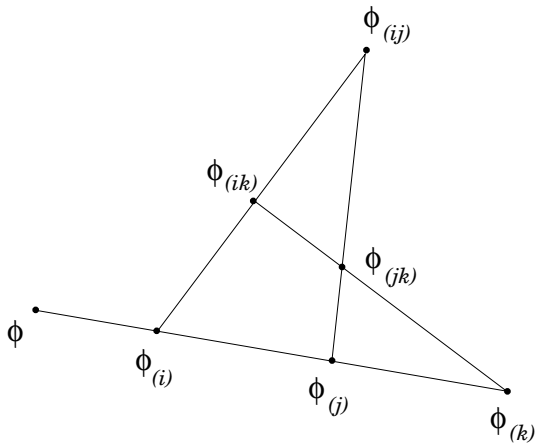}
\end{center}
\caption{Desargues map condition and the Veblen configuration.} 
\label{fig:Desargues-Veblen}
\end{figure}
In the homogeneous coordinates $\bPhi\colon\ZZ^{\tilde{N}}\to\DD^{M+1}$ the defining condition of Desargues maps can be described in terms of a linear relation between $\bPhi$, $\bPhi_{(i)}$ and $\bPhi_{(j)}$. As it was shown in \cite{Dol-Des} there exists a particular normalization of the homogeneous coordinates (gauge) in which
the linear relation takes the form
\cite{DJM-II,Nimmo-NCKP} (we consider right vector spaces over division rings)
\begin{equation} \label{eq:lin-dKP}
\bPhi_{(i)} - \bPhi_{(j)} =  \bPhi U_{ij},  \qquad i \ne j \leq \tilde{N},
\end{equation}
and the functions $U_{ij}\colon\ZZ^{\tilde{N}}\to\DD$ satisfy
to the non-commutative Hirota (or the non-Abelian Hirota--Miwa \cite{Nimmo-NCKP}) system.
\begin{align} \label{eq:alg-comp-U}
& U_{ij} + U_{jl} + U_{li} = 0, \qquad U_{ij}+U_{ji} = 0,\\
& \label{eq:U-rho} 
U_{lj}U_{li(j)} = U_{li} U_{lj(i)},
\end{align}
for distinct indices $i,j,l$.
Equation \eqref{eq:U-rho} allows to introduce 
the potentials $\rho_i\colon\ZZ^{\tilde{N}}\to\DD^\times$ such that
\begin{equation} \label{eq:def-rho}
U_{ij} = \rho_i^{-1}
 \rho_{i(j)}.
\end{equation}

One may ask how big is the class of gauge  functions $F \colon\ZZ^{\tilde{N}}\to\DD^\times$ such that after the transformation $\bPhi \to \hat{\bPhi} = \bPhi F^{-1}$ the form \eqref{eq:lin-dKP} of the linear system remains unchanged. The following results can be demonstrated by direct calculation.
\begin{Prop} \label{prop:gauge-Hir}
The gauge functions of the allowed class are characterized by the condition $F_{(i)} = F_{(j)}$ for all pairs of indices, which means that $F$ is a function of the sum $n_{\tilde{\sigma}}=n_1 + n_2 + \dots + n_{\tilde{N}}$ of the discrete variables. Denote $F_{(\tilde{\sigma})}(n_{\tilde{\sigma}}) = F(n_{\tilde{\sigma}} +1)$ then  the transformed vector-function $\hat{\bPhi}$ satisfies equation 
\begin{equation} \label{eq:allowed-gauge}
\hat{\bPhi}_{(i)} - \hat{\bPhi}_{(j)} =  \hat{\bPhi} \hat{U}_{ij},  \qquad i \ne j \leq \tilde{N}, \qquad \text{with} \quad \hat{U}_{ij} = F U_{ij} F_{(\tilde{\sigma})}^{-1}.
\end{equation}
\end{Prop}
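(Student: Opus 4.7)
My plan is direct substitution combined with a linear-independence argument on generic Desargues maps.

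First, write $\bPhi = \hat{\bPhi} F$ and plug into the linear relation \eqref{eq:lin-dKP}. Using $\bPhi_{(i)} = \hat{\bPhi}_{(i)} F_{(i)}$ this yields
\begin{equation*}
\hat{\bPhi}_{(i)} F_{(i)} - \hat{\bPhi}_{(j)} F_{(j)} = \hat{\bPhi} F U_{ij}.
\end{equation*}
If the preserved form $\hat{\bPhi}_{(i)} - \hat{\bPhi}_{(j)} = \hat{\bPhi} \hat{U}_{ij}$ is imposed, I eliminate $\hat{\bPhi}_{(i)}$ between the two equations to obtain
\begin{equation*}
\hat{\bPhi}_{(j)} \bigl( F_{(i)} - F_{(j)} \bigr) = \hat{\bPhi} \bigl( F U_{ij} - \hat{U}_{ij} F_{(i)} \bigr).
\end{equation*}

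Second, I would invoke genericity of the Desargues map: the homogeneous coordinate vectors $\hat{\bPhi}$ and $\hat{\bPhi}_{(j)}$ represent two distinct points in $\PP^M(\DD)$, hence are linearly independent in the right $\DD$-module $\DD^{M+1}$. Consequently both scalar coefficients must vanish, giving the two conditions
\begin{equation*}
F_{(i)} = F_{(j)}, \qquad \hat{U}_{ij} = F U_{ij} F_{(i)}^{-1}.
\end{equation*}
The first condition, valid for all pairs $i\neq j$, means that $F$ is invariant under each shift $n_i \to n_i + 1,\; n_j \to n_j - 1$, so $F$ depends only on the sum $n_{\tilde{\sigma}} = n_1 + \dots + n_{\tilde{N}}$. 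Since $n_i\mapsto n_i+1$ induces $n_{\tilde{\sigma}}\mapsto n_{\tilde{\sigma}}+1$, we have $F_{(i)} = F_{(\tilde{\sigma})}$, which turns the formula for $\hat{U}_{ij}$ into the stated one.

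Third, I would check the converse: assuming $F$ depends only on $n_{\tilde{\sigma}}$ and defining $\hat{U}_{ij} := F U_{ij} F_{(\tilde{\sigma})}^{-1}$, direct substitution into the transformed equation reproduces $(\hat{\bPhi}_{(i)} - \hat{\bPhi}_{(j)}) F_{(\tilde{\sigma})} = \hat{\bPhi} F U_{ij}$, which is just the original system multiplied from the right by $F_{(\tilde{\sigma})}$.

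The only subtle point is the linear independence step, but it is essentially the non-degeneracy assumption built into the definition of a Desargues map (the three collinear points $\Phi, \Phi_{(i)}, \Phi_{(j)}$ are in particular not all coincident); once that is accepted, the remainder of the proof is indeed the "direct calculation" advertised in the statement.
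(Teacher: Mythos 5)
Your computation is correct and is exactly the ``direct calculation'' the paper invokes without spelling out: substituting $\bPhi=\hat{\bPhi}F$ into \eqref{eq:lin-dKP}, using the (generic) right-linear independence of $\hat{\bPhi}$ and $\hat{\bPhi}_{(j)}$ to force $F_{(i)}=F_{(j)}$ and $\hat{U}_{ij}=FU_{ij}F_{(i)}^{-1}$, and then verifying the converse. Your explicit flagging of the non-degeneracy assumption behind the linear-independence step is a welcome clarification, but it does not change the route.
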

\begin{Cor}
The resulting transformation $U_{ij}\to\hat{U}_{ij} $ provides a symmetry of the non-commutative Hirota system \eqref{eq:alg-comp-U}-\eqref{eq:U-rho}. The corresponding transformation of the potentials reads $\hat{\rho}_i = \rho_i F^{-1}$. 
\end{Cor}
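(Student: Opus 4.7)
The plan is a direct verification, leveraging the structural fact established in Proposition~\ref{prop:gauge-Hir}: because $F$ depends only on the diagonal variable $n_{\tilde\sigma}=n_1+\dots+n_{\tilde N}$, every single-step translation $F_{(i)}$ coincides with $F_{(\tilde\sigma)}$, independently of the direction $i$. This single observation is the engine behind the whole statement, and I expect the computations to be routine once it is recorded.

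First I would check the two purely algebraic constraints in \eqref{eq:alg-comp-U}. At a fixed lattice point the gauged fields are $\hat U_{ij}=F\,U_{ij}\,F_{(\tilde\sigma)}^{-1}$ with one and the same left factor $F$ and one and the same right factor $F_{(\tilde\sigma)}^{-1}$ for every pair $(i,j)$. Hence the antisymmetry $\hat U_{ij}+\hat U_{ji}=0$ and the triangle identity $\hat U_{ij}+\hat U_{jl}+\hat U_{li}=0$ are obtained from the corresponding identities for $U$ by conjugation-like multiplication on the two sides, and therefore inherited for free.

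Next I would attack the multiplicative relation \eqref{eq:U-rho}. Expanding the left-hand side,
\begin{equation*}
\hat U_{lj}\,\hat U_{li(j)}
= F\,U_{lj}\,F_{(\tilde\sigma)}^{-1}\,F_{(j)}\,U_{li(j)}\,F_{(j)(\tilde\sigma)}^{-1},
\end{equation*}
and using $F_{(j)}=F_{(\tilde\sigma)}$ the middle pair collapses to the identity, giving $F\,U_{lj}U_{li(j)}\,F_{(j)(\tilde\sigma)}^{-1}$. A symmetric computation yields
\begin{equation*}
\hat U_{li}\,\hat U_{lj(i)} = F\,U_{li}U_{lj(i)}\,F_{(i)(\tilde\sigma)}^{-1}.
\end{equation*}
Since translations commute and $F$ depends only on $n_{\tilde\sigma}$, we have $F_{(j)(\tilde\sigma)}=F_{(i)(\tilde\sigma)}$ (both equal $F(n_{\tilde\sigma}+2)$), so the required equality reduces to $U_{lj}U_{li(j)}=U_{li}U_{lj(i)}$, which is exactly \eqref{eq:U-rho} for the original field.

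Finally, for the potential transformation I would simply substitute $\hat\rho_i=\rho_i F^{-1}$ into the definition \eqref{eq:def-rho}:
\begin{equation*}
\hat\rho_i^{-1}\hat\rho_{i(j)} = F\,\rho_i^{-1}\rho_{i(j)}\,F_{(j)}^{-1} = F\,U_{ij}\,F_{(\tilde\sigma)}^{-1} = \hat U_{ij},
\end{equation*}
again using $F_{(j)}=F_{(\tilde\sigma)}$. The only thing worth highlighting as a potential snag is bookkeeping of the shifted arguments of $F$; once one commits to the identification $F_{(i)}=F_{(\tilde\sigma)}$ at the outset, each line collapses to the corresponding line for the untilded system.
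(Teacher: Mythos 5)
Your verification is correct and is exactly the direct calculation the paper has in mind (the paper simply states that these results "can be demonstrated by direct calculation" and omits the details). The key point you isolate, $F_{(i)}=F_{(\tilde{\sigma})}$ for every direction $i$, is indeed what makes each relation of \eqref{eq:alg-comp-U}--\eqref{eq:U-rho} and the identification $\hat{\rho}_i=\rho_i F^{-1}$ collapse to the untransformed ones.
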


We remark that when $\DD$ is commutative then the simpler part of the algebraic relations \eqref{eq:alg-comp-U} implies that the functions $\rho_i$ can 
be parametrized in terms of a 
single potential $\tau$ (the tau-function)
\begin{equation} \label{eq:r-tau}
\rho_i = (-1)^{\sum_{j>i}n_j}
\frac{\tau_{(i)}}{\tau}
\end{equation} 
Then remaining equations
\eqref{eq:alg-comp-U} reduce to the celebrated Hirota system \cite{Hirota}
\begin{equation} \label{eq:H-M}
\tau_{(i)}\tau_{(jl)} - \tau_{(j)}\tau_{(il)} + \tau_{(l)}\tau_{(ij)} =0,
\qquad 1\leq i< j < l \leq \tilde{N}.
\end{equation}
\begin{Cor}
Having still $\DD$ commutative, denote by $\mathcal{M}$ the "discrete logarithmic integral" of the function $F$, i.e. $\mathcal{M}_{(\tilde{\sigma})} = F \mathcal{M}$. Then during transformation \eqref{eq:allowed-gauge} the $\tau$-function changes to $\hat{\tau} = \tau/ \mathcal{M}$, which provides symmetry of the Hirota system \eqref{eq:H-M}. 
\end{Cor}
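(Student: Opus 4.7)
The plan is to verify that the substitution $\hat{\tau} = \tau/\mathcal{M}$ simultaneously (i) reproduces the already established gauge transformation $\hat{\rho}_i = \rho_i F^{-1}$ of the potentials via the parametrization \eqref{eq:r-tau}, and (ii) leaves the bilinear Hirota system \eqref{eq:H-M} invariant. Both steps rest on a single observation: since $F$ depends only on $n_{\tilde{\sigma}} = n_1 + \dots + n_{\tilde{N}}$, its discrete logarithmic integral $\mathcal{M}$ does as well, and therefore $\mathcal{M}_{(i)} = \mathcal{M}_{(\tilde{\sigma})} = F\mathcal{M}$ for \emph{every} individual index $i$.

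For step (i) I would plug $\hat{\tau} = \tau/\mathcal{M}$ into the right hand side of \eqref{eq:r-tau} written for $\hat{\rho}_i$. The sign factor $(-1)^{\sum_{j>i} n_j}$ carries over unchanged, while the ratio $\hat{\tau}_{(i)}/\hat{\tau}$ factors as $(\tau_{(i)}/\tau)(\mathcal{M}/\mathcal{M}_{(i)})$; the observation above reduces the second factor to $F^{-1}$, so that $\hat{\rho}_i = \rho_i F^{-1}$, in agreement with the preceding corollary. This also shows that $\hat\tau = \tau/\mathcal{M}$ is the unique (up to a multiplicative constant) lift of the gauge symmetry to the level of the tau-function.

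For step (ii) I would substitute $\hat{\tau}_{(a)} = \tau_{(a)}/\mathcal{M}_{(a)}$ into \eqref{eq:H-M} for every multi-index $a \in \{i,j,l,ij,jl,il\}$ appearing there. Every single shift contributes one application of $\mathcal{M} \mapsto F\mathcal{M}$, and every double shift contributes two, so each of the three terms of \eqref{eq:H-M} acquires exactly the same denominator $\mathcal{M}_{(i)}\mathcal{M}_{(jl)} = \mathcal{M}_{(j)}\mathcal{M}_{(il)} = \mathcal{M}_{(l)}\mathcal{M}_{(ij)} = F \cdot F_{(\tilde{\sigma})} \cdot F \cdot \mathcal{M}^2$. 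Factoring this common scalar out reduces the bilinear identity for $\hat{\tau}$ to the one for $\tau$. The only step needing care is the bookkeeping of how $\mathcal{M}$ behaves under shifts in the various $n_i$; beyond that the argument is purely algebraic, and I do not anticipate a genuine obstacle — the corollary essentially expresses that the gauge symmetry $\hat{\rho}_i = \rho_i F^{-1}$ lifts consistently from the potentials to the tau-function.
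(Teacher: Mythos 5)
Your proposal is correct and follows the same direct-calculation route the paper intends: using that $\mathcal{M}$ depends only on $n_{\tilde{\sigma}}$, so $\mathcal{M}_{(i)}=F\mathcal{M}$ for every $i$, one checks both the compatibility with $\hat{\rho}_i=\rho_i F^{-1}$ through \eqref{eq:r-tau} and the invariance of \eqref{eq:H-M} via the common factor $F^2F_{(\tilde{\sigma})}\mathcal{M}^2$. No gaps; the uniqueness remark is a harmless aside.
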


In Section~\ref{sec:per-red} we will need the following non-autonomous version \cite{WTS} of the Hirota system \eqref{eq:H-M} 
\begin{equation} \label{eq:H-M-na}
(A_j - A_l) \tau_{(i)}\tau_{(jl)} + (A_l - A_i) \tau_{(j)}\tau_{(il)} + (A_i - A_j) 
\tau_{(l)}\tau_{(ij)} =0,
\qquad 1\leq i< j < l \leq \tilde{N},
\end{equation}
where $A_i$ is an arbitrary function of the variable $n_i$, $i=1, \dots ,\tilde{N}$. As it was discussed in \cite{WTS} both equations \eqref{eq:H-M} and \eqref{eq:H-M-na} are equivalent when considered prior to imposing reductions. 

\begin{Rem}
We remark that the "Hirota equation" studied in \cite{Faddeev-Volkov-Hirota,BobSur-nc} is the (autonomous version of the) lattice modified KdV
equation~\eqref{eq:l-mKdV}, see also \cite{Hirota-KdV,Hirota-sG}. Equation \eqref{eq:H-M} in the basic case of $\tilde{N}=3$ discrete variables was originally described as "discrete analogue of a generalized Toda equation". It is also called "Hirota--Miwa" system after Miwa showed~\cite{Miwa} its fundamental role in the whole KP hierarchy.
\end{Rem}

\subsection{Non-commutative lattice Kadomtsev-Petviashvilii systems}
Let $\tilde{N} = N+1$, we distinguish the last variable $k=n_{N+1}$. To match notation of \cite{KNY-qKP} we denote also $n=(n_1, \dots , n_N)$, and
\begin{equation*}
\bPhi(n,k) = \bphi_k(n), \quad U_{N+1,i}(n,k) = u_{i,k}(n), \quad \rho_{N+1}(n,k) = r_k(n), \quad \tau(n,k) = \tau_k(n), \quad F(n,k) = f_k(n),
\end{equation*}
which allows the rewrite a part (that with the distinguished variable) of the linear problem \eqref{eq:lin-dKP} in the form
\begin{equation} \label{eq:lin-lKPh}
\bphi_{k+1} - \bphi_{k(i)} = \bphi_k u_{i,k}, \qquad i=1,\dots ,N.
\end{equation}
The compatibility of the above linear system reads
\begin{align} \label{eq:ncKP-1}
& u_{j,k}u_{i,k(j)} = u_{i,k} u_{j,k(i)}, \qquad i\neq j,\\
& u_{i,k(j)} + u_{j,k+1} = u_{j,k(i)} + u_{i,k+1}. \label{eq:ncKP-2}
\end{align}
Equations \eqref{eq:ncKP-1} allow (what we knew already) to define potentials $r_{k}$ such that $u_{i,k} = r_{k}^{-1} r_{k(i)}$, while equations \eqref{eq:ncKP-2} give the system 
\begin{align} \label{eq:KP}
(r_{k(j)}^{-1} - r_{k(i)}^{-1}) r_{k(ij)} & = r_{k+1}^{-1} (r_{k+1(i)} - r_{k+1(j)} ), \qquad i\neq j,
\end{align}
Following terminology of \cite{KNY-qKP} we can call the system \eqref{eq:ncKP-1}-\eqref{eq:ncKP-2} (or equations \eqref{eq:KP}) the non-commutative lattice KP hierarchy. In fact, from \eqref{eq:lin-lKPh} and \eqref{eq:ncKP-1}-\eqref{eq:ncKP-2} we can recover all other ingredients of the non-commutative Hirota system 
\begin{align*}
U_{i, N+1}(n,k) & = - u_{i,k}(n) = -  r_{k}(n)^{-1} r_{k(i)}(n), \\ 
U_{ij}(n,k) = - U_{j,N+1}(n,k) - U_{N+1,i}(n,k) &= u_{j,k}(n) - u_{i,k}(n) =  r_{k}(n)^{-1} ( r_{k(j)}(n) - r_{k(i)}(n) ).
\end{align*}
One can check that the remaining equations of the system \eqref{eq:alg-comp-U}-\eqref{eq:U-rho} are either trivially satisfied or are equivalent to \eqref{eq:ncKP-1}-\eqref{eq:ncKP-2} (or \eqref{eq:KP} without periodicity assumption). We remark that identity \eqref{eq:LR-identity} which turned out to be helpful in the proof of Theorem~\ref{th:3D-cons} was suggested by equation 
$U_{ij} U_{il(j)}= U_{il} U_{ij(l)}$. Notice that the proof of Theorem~\ref{th:3D-cons} goes over to the "infinite-component" case \eqref{eq:KP}.

Let us present the following consequence of Proposition~\ref{prop:gauge-Hir}.
\begin{Cor} \label{cor:gauge-KP}
A system of gauge functions $f_k\colon\ZZ^N\to\DD^\times$, $k\in\ZZ$, leaving the form of the linear system  \eqref{eq:lin-lKPh} unchanged is characterized by the condition $f_{k(i)} = f_{k+1}$.  Then the transformation
\begin{equation}
\bphi_k \to \hat{\bphi}_k = \bphi_k f_k^{-1}, \qquad
u_{i,k}\to \hat{u}_{i,k} = f_k u_{i,k} f_{k+1}^{-1}, \qquad 
r_k \to \hat{r}_k = r_k f_k^{-1}, 
\end{equation}
provides symmetry of equations \eqref{eq:ncKP-1}-\eqref{eq:ncKP-2} and of the system
\eqref{eq:KP}, correspondingly.
\end{Cor}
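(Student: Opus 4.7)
The plan is to specialize Proposition~\ref{prop:gauge-Hir} to the lattice KP setting with $\tilde N=N+1$, keeping $k=n_{N+1}$ as the distinguished variable used in the derivation of \eqref{eq:lin-lKPh}. Under the identifications $F(n,k)=f_k(n)$ and $\bPhi(n,k)=\bphi_k(n)$, the symmetric condition $F_{(i)}=F_{(j)}$ for all $1\leq i<j\leq N+1$ decomposes into $f_{k(i)}=f_{k(j)}$ for $1\leq i<j\leq N$ (i.e.\ $f_k$ depends on $n$ only through $n_1+\dots+n_N$) together with $f_{k(i)}=f_{k+1}$ for $i=1,\dots,N$. The second family clearly implies the first, so the single condition $f_{k(i)}=f_{k+1}$ is the required characterization.

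Next I would substitute $\bphi_k=\hat\bphi_k f_k$ into \eqref{eq:lin-lKPh}. The left-hand side becomes $\hat\bphi_{k+1}f_{k+1}-\hat\bphi_{k(i)}f_{k(i)}$; using $f_{k(i)}=f_{k+1}$ to factor $f_{k+1}$ on the right gives
\begin{equation*}
\hat\bphi_{k+1}-\hat\bphi_{k(i)}=\hat\bphi_k\bigl(f_k\,u_{i,k}\,f_{k+1}^{-1}\bigr),
\end{equation*}
so the linear problem retains its form with $\hat u_{i,k}=f_k u_{i,k} f_{k+1}^{-1}$. For the potential I would use $u_{i,k}=r_k^{-1}r_{k(i)}$ and compute
\begin{equation*}
\hat r_k^{-1}\hat r_{k(i)}=f_k r_k^{-1} r_{k(i)} f_{k(i)}^{-1}=f_k u_{i,k} f_{k+1}^{-1}=\hat u_{i,k},
\end{equation*}
confirming that the ansatz $\hat r_k=r_k f_k^{-1}$ is compatible with the definition $\hat u_{i,k}=\hat r_k^{-1}\hat r_{k(i)}$.

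To conclude that \eqref{eq:ncKP-1}--\eqref{eq:ncKP-2} and \eqref{eq:KP} are preserved, I would simply invoke the fact that these nonlinear systems are precisely the compatibility conditions of \eqref{eq:lin-lKPh}: once the shape of the linear system is preserved by the gauge, its integrability conditions are automatically transported from $u_{i,k}$ to $\hat u_{i,k}$ and from $r_k$ to $\hat r_k$. Direct verification, e.g.\ inserting $\hat u_{i,k}=f_k u_{i,k}f_{k+1}^{-1}$ into \eqref{eq:ncKP-1} and using $f_{k(i)}=f_{k+1}=f_{k(j)}$ to pull the gauge factors out of the products, yields the same relation for the original $u_{i,k}$, but this calculation is redundant given the first viewpoint.

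I do not anticipate a real obstacle; the only point that needs care is the translation of the symmetric condition of Proposition~\ref{prop:gauge-Hir} into the asymmetric KP-form condition $f_{k(i)}=f_{k+1}$, and the consistent bookkeeping of indices ensuring that $\hat r_k=r_k f_k^{-1}$ (rather than some other shifted combination) is the correct transformation on the potential side.
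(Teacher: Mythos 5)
Your proposal is correct and follows essentially the same route as the paper, which states the corollary without separate proof precisely as the specialization of Proposition~\ref{prop:gauge-Hir} (and of the corollary on the potentials $\hat{\rho}_i=\rho_i F^{-1}$) to the notation with the distinguished variable $k=n_{N+1}$, where $F_{(i)}=F_{(j)}$ becomes $f_{k(i)}=f_{k+1}$ and $\hat{U}_{ij}=FU_{ij}F_{(\tilde{\sigma})}^{-1}$ becomes $\hat{u}_{i,k}=f_k u_{i,k}f_{k+1}^{-1}$. Your added direct checks of the gauged linear problem, of $\hat{r}_k^{-1}\hat{r}_{k(i)}=\hat{u}_{i,k}$, and of equations \eqref{eq:ncKP-1}--\eqref{eq:ncKP-2} are consistent with this and fill in the same ``direct calculation'' the paper alludes to.
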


\section{Periodic reductions of Desargues maps}
\label{sec:per-red}
\subsection{Non-commutative lattice modified Gel'fand--Dikii systems}
At this point we can impose the periodic reduction $\Phi(n,k+K) = \Phi(n,K)$ on the level of Desargues maps, and study its implications on the level of the functions 
$\bphi_k$ and $u_{i,k}$. 
\begin{Rem}
Without entering into details we mention that interpretation~\cite{Dol-Des} of Desargues maps in terms of quadrilateral lattices~\cite{MQL} and their Laplace transformations~\cite{DCN,TQL} provides geometric indication of integrability of the periodic reduction. This follows from the observation that the Laplace transforms of a periodic quadrilateral lattice preserve the periodicity condition.
\end{Rem}
Then, by definition of the homogeneous coordinates, there exist functions (the monodromy factors) $\mu_k\colon \ZZ^N\to\DD^\times$ such that, 
$\bphi_{k+K} = \bphi_k \mu_k$. 
\begin{Prop}
The monodromy factors satisfy the condition $\mu_{k+1} = \mu_{k(i)}$, $i=1,\dots ,N$.
\end{Prop}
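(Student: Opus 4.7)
The plan is to extract the monodromy condition from the compatibility of the periodicity relation $\bphi_{k+K} = \bphi_k \mu_k$ with the linear system \eqref{eq:lin-lKPh}. First I would evaluate \eqref{eq:lin-lKPh} at level $k+K$, obtaining
\begin{equation*}
\bphi_{k+K+1} - \bphi_{k+K(i)} = \bphi_{k+K}\, u_{i,k+K},
\end{equation*}
and then substitute $\bphi_{m+K} = \bphi_m \mu_m$ (applied at $m=k+1$, at $m=k$ shifted in direction $(i)$, and at $m=k$) to rewrite everything in terms of $\bphi$ at levels $k$ and $k+1$:
\begin{equation*}
\bphi_{k+1}\mu_{k+1} - \bphi_{k(i)}\mu_{k(i)} = \bphi_k\mu_k\, u_{i,k+K}.
\end{equation*}

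Next I would take equation \eqref{eq:lin-lKPh} itself and multiply it on the right by $\mu_{k+1}$, giving $\bphi_{k+1}\mu_{k+1} - \bphi_{k(i)}\mu_{k+1} = \bphi_k u_{i,k} \mu_{k+1}$, and then subtract from the previous display. The $\bphi_{k+1}$ term drops out and one is left with
\begin{equation*}
\bphi_{k(i)}\bigl(\mu_{k+1} - \mu_{k(i)}\bigr) = \bphi_k\bigl(\mu_k u_{i,k+K} - u_{i,k}\mu_{k+1}\bigr).
\end{equation*}
Since $\bphi_k$ and $\bphi_{k(i)}$ are homogeneous representatives of two distinct points of $\PP^M(\DD)$ lying on the line that also contains $\Phi(n,k+1)$ (the Veblen configuration of Figure~\ref{fig:Desargues-Veblen}), they span a two-dimensional right $\DD$-subspace of $\DD^{M+1}$ and are therefore linearly independent. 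Both scalar coefficients must consequently vanish, yielding the desired identity $\mu_{k+1} = \mu_{k(i)}$ together with the companion relation $u_{i,k+K} = \mu_k^{-1} u_{i,k} \mu_{k(i)}$, which describes how the KP potentials transform under the monodromy.

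The only conceptual hurdle is justifying the linear independence of $\bphi_k$ and $\bphi_{k(i)}$, but this is inherent to the Desargues-map construction recalled in Section~\ref{sec:Des}. It is also worth remarking that the result is structurally consistent with Corollary~\ref{cor:gauge-KP}: the condition $\mu_{k+1} = \mu_{k(i)}$ is exactly the characterization of an allowed gauge function for the linear system \eqref{eq:lin-lKPh}, reflecting the fact that the monodromy acts as such a gauge in identifying level $k+K$ with level $k$.
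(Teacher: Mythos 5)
Your proof is correct, and it is essentially the explicit version of the argument the paper compresses into one line: the paper observes that the periodicity relation $\bphi_{k+K}=\bphi_k\mu_k$ means the monodromy factors act as gauge functions preserving the form of the linear problem \eqref{eq:lin-lKPh}, and then simply invokes the characterization $f_{k(i)}=f_{k+1}$ of Corollary~\ref{cor:gauge-KP}. You instead unfold the computation that underlies that corollary: writing \eqref{eq:lin-lKPh} at level $k+K$, substituting the monodromy relation, comparing with \eqref{eq:lin-lKPh} multiplied on the right by $\mu_{k+1}$, and concluding from the linear independence of $\bphi_k$ and $\bphi_{k(i)}$ over $\DD$ that both scalar coefficients vanish. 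What your route buys is a self-contained derivation that simultaneously yields the companion relation $u_{i,k+K}=\mu_k^{-1}u_{i,k}\mu_{k(i)}$, which the paper states separately as the corollary following the proposition; what the paper's route buys is brevity, by reusing the gauge symmetry already recorded in Proposition~\ref{prop:gauge-Hir} and Corollary~\ref{cor:gauge-KP}, whose own ``direct calculation'' is exactly of the kind you performed. The one point to keep in mind is that the linear independence of $\bphi_k$ and $\bphi_{k(i)}$ is a genericity (non-degeneracy) assumption of the Desargues-map setting --- it amounts to $\Phi(n,k)$ and its translate $\Phi_{(i)}(n,k)$ being distinct points of $\PP^M(\DD)$ --- rather than something you derive; but the same implicit assumption is needed for the paper's Proposition~\ref{prop:gauge-Hir}, so your argument stands on the same footing as the paper's.
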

\begin{proof}
The monodromy factors do not change the structure of the linear problem \eqref{eq:lin-lKPh}, therefore the conclusion follows from Corollary~\ref{cor:gauge-KP}.
\end{proof}
\begin{Cor}
The corresponding transformation of the potentials $u_{i,k}$ and $r_k$ is given by
\begin{equation}
u_{i,k+K} = \mu_k^{-1} u_{i,k}\mu_{k(i)}, \qquad 
r_{k+K} = r_k \mu_k.
\end{equation}
\end{Cor}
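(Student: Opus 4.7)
The plan is to reduce the statement to a direct application of Corollary~\ref{cor:gauge-KP} by identifying the monodromy factors with (the inverse of) an allowed gauge. The preceding Proposition tells us that the family $\{\mu_k\}$ satisfies $\mu_{k(i)} = \mu_{k+1}$, which is precisely the characterization of gauges $f_k$ in Corollary~\ref{cor:gauge-KP} provided we set $f_k = \mu_k^{-1}$ (so that $f_{k(i)} = \mu_{k(i)}^{-1} = \mu_{k+1}^{-1} = f_{k+1}$).

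With this identification, the periodicity relation $\bphi_{k+K} = \bphi_k\,\mu_k$ says exactly that the translate $\hat{\bphi}_k := \bphi_{k+K}$ is the gauge transform $\bphi_k f_k^{-1}$ of $\bphi_k$. Because both $\bphi_k$ and $\bphi_{k+K}$ satisfy a linear problem of the form \eqref{eq:lin-lKPh}, Corollary~\ref{cor:gauge-KP} then dictates how the coefficients $u_{i,k}$ and the potential $r_k$ are transformed. Substituting $f_k=\mu_k^{-1}$ and $f_{k+1}^{-1}=\mu_{k+1}=\mu_{k(i)}$ into $\hat{u}_{i,k}=f_k u_{i,k}f_{k+1}^{-1}$ yields $u_{i,k+K} = \mu_k^{-1} u_{i,k}\mu_{k(i)}$, while $\hat{r}_k=r_k f_k^{-1}$ yields $r_{k+K}=r_k\mu_k$.

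As a cross-check, and in case one prefers a direct derivation, I would shift $k\mapsto k+K$ in \eqref{eq:lin-lKPh}, getting $\bphi_{k+K+1}-\bphi_{k+K(i)}=\bphi_{k+K}u_{i,k+K}$, then insert $\bphi_{k+K+1}=\bphi_{k+1}\mu_{k+1}$, $\bphi_{k+K(i)}=\bphi_{k(i)}\mu_{k(i)}$, $\bphi_{k+K}=\bphi_k\mu_k$, and use $\mu_{k+1}=\mu_{k(i)}$ to factor the left-hand side as $(\bphi_{k+1}-\bphi_{k(i)})\mu_{k(i)}=\bphi_k u_{i,k}\mu_{k(i)}$. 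Matching coefficients of $\bphi_k$ produces the first formula, and then compatibility with $u_{i,k}=r_k^{-1}r_{k(i)}$ forces $r_{k+K}=r_k\mu_k$.

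The only delicate point, and the step that merits a brief remark, is the justification of cancelling $\bphi_k$ from $\bphi_k u_{i,k}\mu_{k(i)}=\bphi_k\mu_k u_{i,k+K}$; this is handled by recalling the interpretation of $\bphi_k$ as a column vector of sufficiently many independent homogeneous coordinates so that right-multiplication coefficients are determined. Apart from this routine genericity argument, no real obstacle arises: the statement is essentially a re-labelling of Corollary~\ref{cor:gauge-KP} in the context of the periodic reduction.
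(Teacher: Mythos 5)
Your proposal is correct and follows exactly the route the paper intends: identify the monodromy factors as an allowed gauge via the preceding Proposition ($\mu_{k+1}=\mu_{k(i)}$, i.e.\ $f_k=\mu_k^{-1}$) and read off the transformation of $u_{i,k}$ and $r_k$ from Corollary~\ref{cor:gauge-KP}, with the direct substitution into \eqref{eq:lin-lKPh} as an equivalent check. The cancellation of $\bphi_k$ needs only that $\bphi_k$ is a nonzero vector of homogeneous coordinates over a division ring (one nonzero component suffices), so your genericity remark is even slightly stronger than necessary.
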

The periodicity condition on the geometric level implies then the following linear system
\begin{equation} \label{eq:Lm-kp-K}
\left(\bphi_1, \bphi_2, \dots , \bphi_K \right)_{(i)}= 
\left(\bphi_1, \bphi_2, \dots , \bphi_K \right)
\left(  \begin{array}{ccccc}  
-u_{i,1} & 0  & \cdots & 0 & \mu_1 \\
1 & -u_{i,2}  & 0 & \hdots  & 0 \\
0 & 1 & \ddots &  &  \vdots \\
\vdots &  & & -u_{i,K-1}  & 0 \\
0 & 0 & \ \hdots  & 1 & -u_{i,K}  \end{array} \right) ,
\end{equation}
where $\mu_1$ is a function of the variable $n_\sigma = n_1 + \dots + n_N$. The corresponding system of non-linear equations takes the form
\begin{align} 
\label{eq:GD-K-mu}
(r_{k(j)}^{-1} - r_{k(i)}^{-1}) r_{k(ij)} & = r_{k+1}^{-1} (r_{k+1(i)} - r_{k+1(j)} ), \quad k=1,\dots ,K-1, \\ \nonumber
(r_{K(j)}^{-1} - r_{K(i)}^{-1}) r_{K(ij)} & = 
\mu_1^{-1} r_{1}^{-1} (r_{1(i)} - r_{1(j)} ) \mu_{1(\sigma)} \qquad i\neq j.
\end{align}
When $\mu_1$ is a constant from the center of $\DD$ then the nonlinear system 
\eqref{eq:GD-K-mu} takes the form \eqref{eq:GD-K}. In such a case we may introduce for convenience new central parameter $\lambda$ by $\mu_1 = (-\lambda)^K$ and define
\begin{equation}
\bpsi_k = (-\lambda)^{-k} (-1)^{n_\sigma} \bphi_k r_k^{-1}, \qquad k=1, \dots , K,
\end{equation} 
then the linear problem \eqref{eq:Lm-kp-K} is replaced by \eqref{eq:LmA}.

\subsection{Non-isospectral lattice modified Gel'fand--Dikii systems}
The system \eqref{eq:GD-K-mu} can be considered as a "non-isospectral" version of the non-commutative lattice modified GD hierarchy \eqref{eq:GD-K}. 
In this Section we concentrate on properties of the system  \eqref{eq:GD-K-mu} and of its commutative reduction. 
\begin{Prop} \label{prop:GD-ni-3D}
The non-commutative non-isospectral lattice modified GD system is three-dimensionally consistent.
\end{Prop}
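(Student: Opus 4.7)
The plan is to mimic the structure of the proof of Theorem~\ref{th:3D-cons}, carefully tracking how the monodromy $\mu_1$ enters when the seam equation (the second line of \eqref{eq:GD-K-mu}) is invoked in the consistency check for $r_{k(ijl)}$. The crucial structural observation is that $\mu_1$ is a function of the single variable $n_\sigma = n_1 + \dots + n_N$, so $\mu_{1(i)} = \mu_{1(j)} = \mu_{1(l)} = \mu_{1(\sigma)}$ and the double shifts $\mu_{1(ij)}$, $\mu_{1(il)}$, $\mu_{1(jl)}$ all coincide with $\mu_{1(\sigma\sigma)}$. Consequently the monodromy factors enter symmetrically along both paths that compute $r_{k(ijl)}$.

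First I would dispatch the generic case, in which the three equations used at levels $k$, $k+1$ and $k+2$ (indices modulo $K$) are all of the regular form. Here the computation of Theorem~\ref{th:3D-cons} applies verbatim and gives $E=0$. Next I would handle the three seam cases in which at least one of these levels equals $K$. For each I would first re-derive the analogue of identity \eqref{eq:LR-identity} at the level where the seam equation is used: by the symmetry of shifts of $\mu_1$ noted above, both sides acquire the same bracketing conjugation of the shape $\mu_1^{-1}(\,\cdot\,)\mu_{1(\sigma)}$, so the identity survives with matching monodromy prefactors. Then I would rerun the computation $E = LR_1 - RR_2$ keeping track of how the one- and two-step shifts of $\mu_1$ bracket the expressions; they pass uniformly through the algebraic manipulations and cancel at the end.

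An alternative and more conceptual route, which I would present as a cross-check, is to invoke the observation recorded at the end of Section~\ref{sec:Des}, namely that the proof of Theorem~\ref{th:3D-cons} extends verbatim to the infinite-component non-commutative lattice KP system \eqref{eq:KP}. From this point of view three-dimensional consistency of \eqref{eq:GD-K-mu} reduces to checking that the quasi-periodic reduction $r_{k+K} = r_k \mu_k$, with monodromy satisfying $\mu_{k+1} = \mu_{k(i)}$, is preserved by the KP evolution. This is precisely the stability of the allowed gauge class characterised in Corollary~\ref{cor:gauge-KP}, and it is automatic from the description of $\mu_1$ as a function of $n_\sigma$ alone.

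The main obstacle I anticipate is the non-commutative bookkeeping: since $\DD$ is a general division ring and $\mu_1$ multiplies from both sides in the seam equation, every $\mu_1$-factor must be kept on the correct side of the $r_1$-terms for the cancellations to work out. The saving grace is that shifts of $\mu_1$ in different lattice directions coincide, so the factors enter uniformly along both evolution paths, and after stripping the monodromy prefactors the vanishing of $E$ collapses to the identity already established in Theorem~\ref{th:3D-cons}.
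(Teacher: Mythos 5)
Your proposal is correct and follows essentially the same route as the paper: the paper's proof simply notes that, in view of the proof of Theorem~\ref{th:3D-cons}, it suffices to check the cases $k=K-1,K$ after verifying that identity \eqref{eq:LR-identity} still holds for $k=K$, which is exactly your seam analysis with the monodromy factors entering uniformly because $\mu_1$ depends only on $n_\sigma$. Your additional cross-check via Corollary~\ref{cor:gauge-KP} and the infinite-component system \eqref{eq:KP} is consistent with the geometric remark the paper makes at the start of Section~\ref{sec:per-red}, but the direct computation is the argument the paper actually relies on.
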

\begin{proof}
In view of the proof of Theorem~\ref{th:3D-cons} it is enough to examine the consistency for $k=K-1, K$ checking before that identity \eqref{eq:LR-identity} holds for $k=K$. 
\end{proof}
As we have already mentioned, the multicomponent reductions of the non-commutative Hirota system we consider are meaningful in the case of $N=2$ independent variables. Other independent variables in the countable number can be considered as symmetries of the original equation. The same applies to the non-isospectral monodromy factor where the sum of remaining variables gives an additive parameter within the argument of the $\mu_1$ function.

Let us examine the non-isospectral analogue of Lemma~\ref{lem:prod}.
\begin{Lem} \label{lem:prod-ni}
Given solution $r_k$, $k=1,\dots ,K$ of the non-isospectral GD system  \eqref{eq:GD-K-mu}, if the division ring $\DD$ is commutative then the function $R=r_1 r_2 \dots r_K$, splits into the product of functions of single variables $n_i$, $i=1,\dots , N$, and a function of the sum of independent variables.
\end{Lem}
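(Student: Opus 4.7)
The plan is to follow the argument for Lemma~\ref{lem:prod}: multiply the $K$ equations of \eqref{eq:GD-K-mu} and use commutativity of $\DD$ to telescope. For $k=1,\dots,K-1$ the factors combine exactly as in the isospectral case, while the $K$-th equation introduces the extra commuting prefactor $\mu_1^{-1}\mu_{1(\sigma)}$. In the generic case $r_{k(i)}\neq r_{k(j)}$, cancellation of the common product $\prod_k (r_{k(i)}-r_{k(j)})$ then yields the deformed relation
\[
R_{(ij)}\, R \;=\; R_{(i)}\, R_{(j)}\,\frac{\mu_{1(\sigma)}}{\mu_1},\qquad i\neq j,
\]
which is \eqref{eq:RR} inhomogeneously deformed by a quantity depending only on the diagonal variable $n_\sigma=n_1+\dots+n_N$.

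Next I would separate off the $n_\sigma$-dependence by the substitution $R=\tilde R\cdot B(n_\sigma)$, where $B$ solves the scalar second-order recurrence
\[
\frac{B(n_\sigma+2)\,B(n_\sigma)}{B(n_\sigma+1)^2}\;=\;\frac{\mu_1(n_\sigma+1)}{\mu_1(n_\sigma)},
\]
which admits a solution for any choice of initial values $B(0)$ and $B(1)$. A direct calculation (each shift $n_i\mapsto n_i+1$ raises $n_\sigma$ by one) shows that the $B$-factors produce exactly the right-hand side $\mu_{1(\sigma)}/\mu_1$, so that $\tilde R$ satisfies the unperturbed equation $\tilde R_{(ij)}\tilde R=\tilde R_{(i)}\tilde R_{(j)}$ of Lemma~\ref{lem:prod}. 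Its argument --- the quotient $\tilde R_{(i)}/\tilde R$ is invariant under every shift in $n_j$ for $j\neq i$, hence a function of $n_i$ only --- then gives $\tilde R=\prod_i A_i(n_i)$ up to an overall constant, and multiplying by $B(n_\sigma)$ yields the product structure claimed in the Lemma.

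The only genuinely new ingredient beyond Lemma~\ref{lem:prod} is the solvability of the recurrence for $B$ and the verification that it exactly absorbs the $\mu_1$-inhomogeneity; both are essentially one-line checks. The conceptual content is the observation that the deforming factor $\mu_{1(\sigma)}/\mu_1$ depends only on $n_\sigma$, so it can be removed by a single-variable gauge function $B$ along this same direction; once this is recognised, the reduction to Lemma~\ref{lem:prod} is mechanical, and I do not anticipate any serious obstacle.
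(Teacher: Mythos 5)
Your argument is correct and follows essentially the same route as the paper: multiply the $K$ equations, obtain $R_{(ij)}R=R_{(i)}R_{(j)}\,\mu_{1(\sigma)}/\mu_1$, and absorb the inhomogeneity by a gauge function of $n_\sigma$ so that Lemma~\ref{lem:prod} applies. The only cosmetic difference is that the paper takes the gauge factor directly as the ``discrete logarithmic integral'' $m_1$ of $\mu_1$, i.e.\ the first-order relation $m_{1(\sigma)}=\mu_1 m_1$, whose solution is a particular solution of your second-order recurrence for $B$.
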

\begin{proof}
Multiplying all the $K$ equations of  \eqref{eq:GD-K-mu} and using the commutativity assumption we obtain that in the generic case 
\begin{equation}
R_{(ij)} R = R_{(i)} R_{(j)} \frac{\mu_{1(\sigma)}}{\mu_1},  \qquad i\neq j.
\end{equation}
Introduce the function $m_1$ of the variable $n_{\sigma}$ as the "discrete logarithmic integral" of the monodromy factor: $m_{1(\sigma)} = \mu_1 m_1$. Then the function $\hat{R} = R m_1^{-1}$ satisfies equation \eqref{eq:RR}, which gives the rest of the factorization.
\end{proof}

To find the integrable non-isospectral modification of the non-autonomous lattice modified KdV system \eqref{eq:l-mKdV} we repeate considerations of Section~\ref{sec:mKdV-B} for $K=2$ with 
\begin{equation*}
r_1 = xG, \qquad r_2 = \frac{G m_1}{x} ,
\end{equation*}
then equations \eqref{eq:GD-K-mu} reduce to  
\begin{equation} \label{eq:l-mKdV-ni} 
x_{(ij)} = x \mu_1 \;\frac{x_{(i)}F_j - x_{(j)}F_i}{x_{(j)}F_j - x_{(i)}F_i} , \qquad i\neq j.
\end{equation}

In the case $K=3$ of the non-isospectral lattice modified Boussinesq equation we parametrize the unknown functions $r_1$, $r_2$ and $r_3$ in terms of two functions $x$ and $y$ as follows
\begin{equation*}
r_1 = xG, \qquad r_2 = \frac{yG}{x}, \qquad r_3= \frac{G m_1}{y},
\end{equation*}
which leads to the system 
\begin{equation}
\label{eq:l-B-2-ni} 
x_{(ij)} = \frac{x }{y } \;
\frac{y_{(j)} x_{(i)}F_j - y_{(i)} x_{(j)}F_i}{x_{(j)}F_j - x_{(i)}F_i},
\qquad y_{(ij)} =\mu_1 x \; 
\frac{y_{(i)}F_j - y_{(j)}F_i}{x_{(j)}F_j - x_{(i)}F_i}, \qquad 
i\neq j. 
\end{equation}
It can be rewritten as a scalar equation 
\begin{equation} \label{eq:na-l-m-B-ni}
\begin{split}
\left( \frac{ \mu_1 }{  y_{(ij)} }  
\left( y_{(i)} F_j - y_{(j)} F_i \right) \right)_{(ij)} 
-  & \; \mu_1 y \;
\left( \frac{ F_j }{ y_{(j)} } -  \frac{ F_i }{ y_{(i)} } \right)  =  \\
= 
\left(  \frac{ y_{(ij)} }{ \mu_1 y } \, 
\frac{ y_{(j)} F_j^2 -y_{(i)} F_i^2 }{ y_{(i)} F_j - y_{(j)} F_i } \right)_{(j)}  & - 
\left(  \frac{ y_{(ij)} }{ \mu_1 y } \, 
\frac{ y_{(i)} F_i^2 -y_{(j)} F_j^2 }{ y_{(j)} F_i - y_{(i)} F_j } \right)_{(i)}  , \qquad i\neq j.
\end{split}
\end{equation}

We finally remark, that in the commutative case the non-autonomous Hirota system 
\eqref{eq:H-M-na} with the distinguished last variable and the specification
\begin{equation*}
A_i = F_i + 1, \qquad 1\leq i \leq N, \qquad A_{N+1} =1,
\end{equation*}
gives $\tau$-function formulation of the (non-autonomous and commutative) KP hierarchy
\begin{equation}
F_j\tau_{k(i)} \tau_{k+1 (j)} - F_i\tau_{k(j)} \tau_{k+1 (i)} + (F_i - F_j) \tau_{k+1} \tau_{k(ij)} = 0, \qquad i\neq j.
\end{equation}
Its consequence
\begin{equation}
\left( \left( \frac{\tau_{k}}{\tau_{k+1}}\right)_{(j)} F_i - \left( \frac{\tau_{k}}{\tau_{k+1}}\right)_{(i)} F_j \right)  \left( \frac{\tau_{k+1}}{\tau_{k}}\right)_{(ij)} =
\frac{\tau_{k+1}}{\tau_{k+2}}
\left( \left( \frac{\tau_{k+2}}{\tau_{k+1}}\right)_{(i)} F_i - 
\left( \frac{\tau_{k+2}}{\tau_{k+1}}\right)_{(j)} F_j \right) ,
\end{equation}
after identification 
\begin{equation*}
r_k = \frac{\tau_{k+1}}{\tau_{k}}G, \qquad G_{(i)} = F_i G,
\end{equation*}
gives equations \eqref{eq:KP}. By imposing periodicty condition 
\begin{equation}
\tau_{k+K} = \tau_k m_k, \qquad m_{k+1} = m_{k(i)} = \mu_k m_k,
\end{equation} 
we obtain (commutative version of) equations~\eqref{eq:GD-K-mu},  while trivial monodromy gives equations~\eqref{eq:GD-K}. 

\section{Conclusions}
In this article we introduced a non-commutative and integrable (in the sense of multidimensional consistency) version of the lattice modified Gel'fand-Dikii equations as periodic reductions of the non-commutative Hirota system. In the commutative specification our approach gives rise to presence of single argument functions in the lattice GD equations. The simplest case of period two leads to the well known non-autonomous lattice modified Korteweg--de~Vries equation, but the period three case gives a new non-autonomous version of the lattice modified Boussinesq equation. We mention that such type of deautonomization, where parametres present in the integrable equation are replaced by functions of single variables, can be encountered in previous works~\cite{ABS} where it was related to locality of the multidimensional consistency principle. We presented yet another 
(non-isospectral)  generalization of the lattice GD equations which involves an additional function of single variable. The origin of that generalization is related to geometric meaning of the Hirota system in terms of the Desargues maps. A more standard way of deriving the (commutative but non-isospectral and non-autonomous) lattice GD equations from the $\tau$-function of the non-autonomous Hirota system has been presented as well.

In recent works \cite{DoliwaSergeev-pentagon,Dol-WCR-Hirota} we investigated the transition from Desargues maps and the non-commutative Hirota system to the corresponding quantum case, which allowed to rederive algebraic properties of the quantum plane. It would be interesting to study from that point of view the non-commutative lattice GD equations and to find corresponding reduction of the relevant algebraic structures. The present paper is a part of a general program of deriving integrable systems from Desargues maps (or from the Hirota system). In particular it would be desirable to exploit free functions of single arguments in the non-isospectral and non-autonomous lattice modified GD equations in order to obtain in this way~\cite{GRSWC,KNY-qKP,Ormerod} such distinguished systems like the $Q4$ equation \cite{Q4} or more advanced lattice Painleve equations~\cite{Sakai}.

\section*{Acknowledgments}
The research was supported in part by Polish Ministry of Science and Higher Education grant No.~N~N202~174739.

\bibliographystyle{amsplain}

\providecommand{\bysame}{\leavevmode\hbox to3em{\hrulefill}\thinspace}

\end{document}